\newtheorem{thm}{Theorem}
\newtheorem{lemma}{Lemma}
\tikzset{
l/.style={gray,dashed},
ll/.style={red,dashed},
grn/.style=green!70!black,
b/.style=blue,
r/.style=red,
p/.style=purple}
\newcommand{\Z}{\mathbb{Z}}
\renewcommand{\P}{\mathbb{P}}
\newcommand{\X}{\mathcal{X}}
\newcommand{\R}{\mathbb{R}}
\newcommand{\D}{\mathcal{D}}
\newcommand{\red}{\textcolor{red}}
\title{Hypothesis Testing for Topological Data Analysis}
\author{Andrew Robinson \& Katharine Turner}
\begin{document}
\maketitle

\section*{Abstract}

Persistent homology is a vital tool for topological data analysis. Previous work has developed some statistical estimators for characteristics of collections of persistence diagrams.  However, tools that provide statistical inference for  observations that are persistence diagrams are limited.  Specifically, there is a need for tests that can assess the strength of evidence against a claim that two samples arise from the same population or process.  We propose the use of randomization-style null hypothesis significance tests (NHST) for these situations.  The test is based on a loss function that comprises pairwise distances between the elements of each sample and all the elements in the other sample.  We use this method to analyze a range of simulated and experimental data. Through these examples we experimentally explore the power of the $p$-values.  Our results show that the randomization-style NHST based on pairwise distances can distinguish between samples from different processes, which suggests that its use for hypothesis tests upon persistence diagrams is reasonable.  We demonstrate its application on a real dataset of fMRI data of patients with ADHD.

\section{Introduction}

Topological Data Analysis (TDA) focuses upon considering shape within data. For example, samples may lie on a submanifold and we may want to learn about this manifold, or we may want to understand the higher dimensional correlations between different variables. The main tool used in TDA is persistence homology, which summarizes how the topology changes through a filtration of a space. An important use of TDA is as a preprocessing tool; getting a topological summary of each object that may be more tractable than the raw information. It may highlight geometric and topological features that are of particular interest. Examples of applications include analysis of the shape of human jaws \citep{gamble2010exploring}, plant root systems \citep{bendich2010computing}, shapes of calcanei bones of various primates \citep{turner2014persistent} and retrieval of trademark symbols \citep{cerri2006retrieval}.



The need to introduce statistical techniques to topological data analysis has become increasingly apparent. The observations used as input to most analysis techniques are generated stochastically, often using a sampling mechanism from a process or population. Therefore, it is useful to talk not only about a single persistence diagram, but about an entire collection of them, i.e.\ a sample, drawn from some distribution or process. Some progress has been made in, for example, calculating means and variances, and applying statistical inference techniques \citep{mileyko2011probability, turner2014frechet, balakrishnan2013statistical, turner2013medians, chazal2013optimal, bubenik2007statistical}. Alternative approaches involve reinterpreting the persistence diagrams as some functional summary lying in a larger Hilbert space. In particular, there has been work on performing statistics with persistence landscapes, including randomization tests \citep{bubenik2015statistical}. Another functional summary used is the persistent homology rank function \citep{Robins2015}. Here, we focus on the situation in which two sets of samples of persistence diagrams have each been derived from a process, and our goal is to assess the strength of evidence against the assertion that the processes are the same.

Null hypothesis significance testing (NHST) is a commonly used and important statistical tool that provides a measure of the strength of evidence against a hypothesis. In the current setting, NHST will quantify the the differences between two different types of underlying objects or processes, using persistence diagrams as observations. For example, NHST  can provide a necessary condition as to whether particular persistence diagrams could be used for classification. 

Unfortunately, the space of persistence diagrams is geometrically very complicated. It is infinite in dimension and arbitrarily curved \citep{turner2014frechet, turner2013medians}. As a result it is not plausible to use any parametric models for distributions, so we cannot do NHST using a method that requires an assumption of an underlying parametric model. Our approach is to instead find a relevant joint loss function and then use a randomization test (also known as a permutation test). This method of NHST is standard in statistical theory and is theoretically rigorous \citep[see, e.g.,][]{casella+berger-1990, welsh-1996}. 

The theory behind the randomization test ensures that when the two sets of diagrams are drawn from the same single distribution of diagrams, then the p value obtained is a random variable with a uniform distribution over an evenly spaced subset of $[0,1]$. However we do not know of any theory that the p value will necessarily be low if the distributions are different.  Furthermore, since persistence diagrams are summary statistics, it is possible that the distributions of the underlying objects under analysis are different but the corresponding distributions of persistence diagrams are similar. We will therefore show by example that there do exist situations where the null hypothesis might be correctly rejected by our method.

In the following section we provide a brief overview of the background theory for TDA and the rationale behind NHST. We then develop a test procedure in Section \ref{sec:procedure} and also outline a Monte Carlo simulation to estimate the corresponding $p$-values in Section \ref{sec:Monte}.   
In Section \ref{sec:examples} we apply the resulting algorithm to a range of data including point clouds of shapes and the persistent homology transform of silhouette data, and the concurrence filtration for fMRI data, respectively. 

%
%
%

  
\section{Preliminaries}

\subsection{TDA Background Theory}

Persistence diagrams are summaries of how the homology groups evolve over filtered spaces. Homology can be computed over any ring but persistent homology requires a field. For computational purposes this field is usually $\Z_2$. In this paper all homology will be computed over $\Z_2$.

A \emph{$k$-simplex} is the convex hull of $k+1$ affinely independent points $v_0,v_1, \ldots v_k$ and is denoted $[v_0,v_1,\ldots,v_k]$. For example, the $0$-simplex $[v_0]$ is the vertex $v_0$, the $1$-simplex $[v_0,v_1]$ is the edge between the vertices $v_0$ and $v_1$ and the $2$ simplex $[v_0, v_1, v_2]$ is the triangle bordered by the edges $[v_0,v_1]$, $[v_1, v_2]$ and $[v_0, v_2]$. Technically, there is an orientation on simplices. If $\tau$ is a permutation then $[v_0,v_1,\ldots,v_k] = (-1)^{\operatorname{sgn}(\tau)}[v_{\tau(0)}, v_{\tau(1)}, \ldots , v_{\tau(k)}]$. However, if we are considering homology over $\Z_2$ then $1=-1$ and we can ignore orientation.

We call $[u_0, u_1, \ldots u_j]$ a \emph{face} of $[v_0,v_1, \ldots v_k]$ if $\{u_0, u_1, \ldots u_j\}\subset \{v_0,v_1, \ldots v_k\}$. A \emph{simplicial complex} $K$ is a countable set of simplices such that
\begin{itemize}
\item Every face of a simplex in $K$ is also in $K$.
\item If two simplices $\sigma_1,\sigma_2$ are in $K$ then their intersection is either empty or a face of both $\sigma_1$ and $\sigma_2$.
\end{itemize}

Given a finite simplicial complex $K$, a \emph{simplicial $k$-chain} is a formal linear combination (with coefficients in our field of choice) of $k$-simplices in $K$. The set of $k$-chains forms a vector space $C_k(K)$. We define the boundary map $\partial_k:C_k(K) \to C_{k-1}(K)$ by setting
\begin{displaymath}
\partial_k([v_0, v_1, \ldots v_k]) = \sum_{j=0}^k (-1)^j[v_0,\ldots \hat{v_j}, \ldots v_k]= \sum_{j=0}^k [v_0,\ldots \hat{v_j}, \ldots v_k]
\end{displaymath}
for each $k$-simplex and extending to $k$-chains linearly. The second equality follows because our coefficient group is $\Z_2$ where $-1=1$.

Elements of $B_k(K) = \operatorname{im} \partial_{k+1}$ are called boundaries and elements of $Z_k(K)=\operatorname{ker} \partial_k$ are called cycles. Direct computation shows $\partial_{k+1}\circ\partial_k=0$ and hence $B_k(K) \subseteq Z_k(K)$. This means we can define the $k^{th}$ \emph{homology group} of $K$ to be
\begin{displaymath}
H_k(K):=Z_k(K)/B_k(K).
\end{displaymath}
A comprehensive introduction to homology can be found in \cite{hatcher2002algebraic}.

 A filter simplicial complex $K = \{K_r | r\in \R\}$ is a family of countable simplicial complexes indexed over the real numbers such that each $K_a$ is a simplicial complex and $K_a \subseteq K_b$ for $a \leq b$. 
We wish to describe how the topology of the filtration changes as the parameter increases. 
For $a\leq b$ we have an inclusion map of simplicial complexes $\iota: K_a \to K_b$ that induces inclusion maps 
\begin{displaymath}
	\iota:B_k(K_a) \to B_k(K_b) \quad \text{and}\quad \iota: Z_k(K_a) \to Z_k(K_b).
\end{displaymath}	
These inclusions induce homomorphisms (which are generally not inclusions) on the homology groups: 
\begin{displaymath}
	\iota_k^{a\to b}:H_k(K_a) \to H_k(K_b).
\end{displaymath} 
The image of $\iota_k^{a\to b}$ consists of equivalence classes of cycles that were present in $K_a$, where the homological equivalence is measured with respect to boundaries in $K_b$.  
%
%
%
We then define the $k$th dimensional persistence diagram as a multiset in $\{(x,y)\in [-\infty, \infty]^2: x<y\}$ such that the number of points (counting multiplicity) in $[-\infty, a]\times [b, \infty]$ is the rank of $\iota_k^{a\to b}$ for all $a<b$. We also include countably infinitely many copies of the diagonal, these represent homology classes that do not persist for any positive amount of time.  

Although we are considering how the topology is changing we actually learn a lot about geometrical features because the filtration has a quantifying effect. Figure~\ref{fig:PH} demonstrates how persistence diagrams may distinguish loops, even when they are topologically equivalent.

\begin{figure}
\begin{center}
\includegraphics[height=1.2 in]{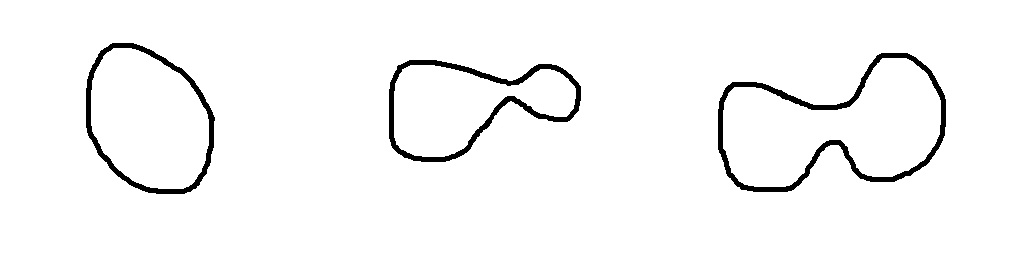}
\begin{minipage}{0.3\linewidth}\centering
\begin{tikzpicture}
[scale=.4]
\draw[->] (0,-1)--(0,7);
\draw[->](-1,0)--(7,0);
\draw[](-1,-1)--(7,7);

\fill (0,5) circle (6pt);

\end{tikzpicture}
\end{minipage}
\begin{minipage}{0.3\linewidth}\centering
\begin{tikzpicture}
[scale=.4]
\draw[->] (0,-1)--(0,7);
\draw[->](-1,0)--(7,0);
\draw[](-1,-1)--(7,7);
	
\fill (0,4) circle (6pt);
\fill (1,2) circle (6pt);
	
\end{tikzpicture}
\end{minipage}
\begin{minipage}{0.3\linewidth}\centering
\begin{tikzpicture}[scale=.4]
\draw[->] (0,-1)--(0,7);
\draw[->](-1,0)--(7,0);
\draw[](-1,-1)--(7,7);

\fill (0,4.5) circle (6pt);
\fill (2,4.5) circle (6pt);

\end{tikzpicture}
\end{minipage}
\end{center}
\caption{The dimensional one persistence diagrams for the filtrations by the distance function for three different loops.}\label{fig:PH}
\end{figure}

\subsection{Distance functions on the space of persistence diagrams}\label{subsec:DistPers}

Let $\D$ denote the space of persistence diagrams. There are many choices of metrics in $\D$, analogous to the variety of metrics on spaces of functions. We will be considering the distance metric that is analogous to the $L_2$ distance in the space of functions on a discrete space and the $2$-Wasserstein distance between probability distributions. A natural family of metrics is discussed in \cite{turner2013medians}.

Let $X$ and $Y$ be diagrams. We can consider bijections $\phi$ between the points in $X$ and the points in $Y$. These are the transport plans that we consider. Bijections always exist because there are countably many points at every location on the diagonal. We only need to consider bijections where off-diagonal points are either paired with off-diagonal points or with the point on the diagonal that is closest to it.

Define 
\begin{align}\label{eq:distance}
d_p(X,Y) = \left( \inf_{\phi:X \to Y} \sum_{x\in X} \|x-\phi(x)\|_p^p\right)^{1/p}
\end{align} 
where $\|x-\phi(x)\|_p^p$ is the distance from $x$ (respectively $\phi(x)$) to closest point on the diagonal whenever $\phi(x)$ (respectively $x$) is a copy of the diagonal.

We will call a bijection between points \emph{optimal} if it achieves the infimum. We can find an optimal bijection, given two diagrams $X$ and $Y$ with only finitely many off diagonal points, using the Hungarian algorithm (also known as  Munkres assignment algorithm). Suppose $X$ has $n$ off-diagonal points, labelled $x_1, x_2, \ldots x_n$, and $Y$ has $m$ off-diagonal points, labelled $y_1, y_2, \ldots y_m$. Let $x_{n+1}, x_{n+2}, \ldots x_{n+m}$ and $y_{m+1}, y_{m+2}, \ldots y_{n+m}$ be copies of the diagonal. We construct a cost matrix with $n+m$ column and rows where the $(i,j)$ entry is $\|x_i -y_j\|_2^2$. When either $x_i$ or $y_j$ is a copy of a diagonal then this is the perpendicular distance. Each transportation plan corresponds to an assignment of rows to columns --- a bijection between the points in $X$ and those in $Y$. 

Given two sets $X$ and $Y$, and pairwise costs associated to assigning to $x\in X$ the object $y\in Y$, the Hungarian algorithm finds the least-cost bijective assignment. Suppose we have two diagrams $X$ and $Y$ each with only finitely many off-diagonal points. Consider as many copies of the diagonal in $X$ and $Y$ to allow the option of matching every off-diagonal point with the diagonal. The cost of $x\in X$ doing task $y\in Y$ is $\|x-y\|_2^2$. The total cost of an assignment (or in other words bijection) $\phi$ is $\sum_{x\in X} \|x-\phi(x)\|_2^2$. The  Hungarian algorithm gives us a bijection $\phi$ that minimizes this cost. This means it gives an optimal bijection between $X$ and $Y$.

By taking the limit as $p$ goes to infinity we get the bottleneck distance between two persistence diagrams $X$ and $Y$; 
$$d_\infty(X,Y) = \inf_{\phi:X \to Y} \sup_{x\in X}\|x-\phi(x)\|_\infty .$$

There are many ways to create filtrations of interest. One of the most common ways is the forming of Rips complexes from point cloud data. We will use this method for our simulated examples later. Given a point cloud $\{x_1,x_2, \ldots x_N\}$ of points in Euclidean space $\R^n$ we define the \emph{Rips complex} with parameter $\epsilon$ (denoted $\mathcal{R}(\epsilon)$) to be the flag complex on the graph whose vertices are $\{x_1,x_2, \ldots x_N\}$ and contains the edge $(x_i, x_j)$ when $\|x_i-x_j\|\leq \epsilon$. We then build a filtration by considering the Rips complexes under an increasing parameter.

\subsection{Persistence Diagrams as random elements}

If our method of constructing a filtration is in some way random, including, for example, by the random selection of units from a population or process, then this randomness can also be seen in the corresponding collection of diagrams. A distribution of filtrations determines a distribution of diagrams. As a result we have a persistence diagram valued random element. This process works for any method of creating a filtration whether it is sublevel sets of a function or the \v Cech or Rips complexes from a point cloud.

For example, suppose we are sampling points $m$ from a subset $K$ of $\R^d$ with some noise. We are stochastically generating a point cloud that will approximate $K$. This sample generates a distribution $\rho_{\text{point clouds}}$ of sets of $m$ points in $\R^d$. Each point cloud determines a filtration of simplicial complexes and hence a distribution $\rho_{\text{filtrations}}$ of filtrations of simplicial complexes. In turn each of the filtrations determines a distribution $\rho_{K}$ of persistence diagrams. Every time we draw $m$ sample points to create a point cloud we are effectively drawing a sample from the distribution $\rho_{\text{point clouds}}$ and hence also drawing a sample persistence diagram from $\rho_{K}$. Under certain conditions, for example that the sample is random, we can learn something about $K$ by analyzing $\rho_{K}$.  Suppose we have another subset $L$ of $\R^d$ which we can similarly sample to form point clouds. We may wish to know if $K$ and $L$ are different. Our null hypothesis would be that they are the same subset. A necessary, but not sufficient, criterion for $K$ to be $L$ is that $\rho_K=\rho_L$. This implies that our null hypothesis for studying persistence diagrams is that the underlying distributions from which $\rho_K$ and $\rho_L$ are drawn are the same. We later consider a simulated examples of this form.

\subsection{Null Hypothesis Significance Testing}

We now review the algorithm and rationale behind null hypothesis significance tests.  Further reading can be found in many introductory and medium-level statistical texts; we mention for example, \citet{casella+berger-1990}, \citet{welsh-1996} and \citet{pawitan-2001}.  The steps for the test are as follows.  First, choose a parameter that represents the data in some way, and about which a pertinent hypothesis can be formed.  Popular examples of parameters include the sample mean and median for tests of location, and the variance for tests of spread.  Second, choose a statistic to use to estimate the parameter.  Third, predict the statistical behavior of the statistic under the null hypothesis, trying to capture the full range of variability that is implied by the model.  Commonly, statistical theory is used to nominate a distribution for the test statistic assuming that the null hypothesis is true. For example, the sample mean might be assumed to have a Gaussian distribution, based on the Central Limit Theorem or the assumed distribution of the data. Fourth, compare the observed value of the test statistic with the expected behavior under the null hypothesis.  If the test statistic is anomalous compared with the expected behavior under the null hypothesis, then it is considered to be evidence against the null hypothesis.

Formal approaches to testing diverge at the point of comparison, and we discuss two of them here.  One approach requires nominating a cutoff, called the \emph{size} of the test, which is by definition the probability of rejecting the null hypothesis.  That is, the cutoff is set to be the probability of rejecting the null hypothesis when it is true.  Then, the probability of observing a result as or more extreme than the observed test statistic is computed based on repeated experiments, assuming that the null hypothesis is true.  That is, we imagine a set of identical experiments to be carried out, for which the null hypothesis is true, and ask what is the proportion of that set for which the computed test statistic is more extreme than the observed value, relative to the null hypothesis.  Then we report the outcome of the comparison of the observed probability against the size of the test.  This is a Neyman--Pearson approach to testing, and in the classical case where the variance is unknown and the hypothesis concerns the mean, the reference distribution is Student's $t$ distribution, with degrees of freedom equal to the sample size $n$ minus one.  Another approach, following Fisher, simply reports the estimated probability computed above, called the p value. The reader is free to place their own interpretation on the p value.  We will follow the latter approach.

In any case, interpretation of the outcome of the usual NHST is conditional on some model, and the hypothesis is stated in terms of parameters of the model.  It is due diligence for the analyst to ensure that the model is a defensible approximation to reality.  This is usually performed by examining graphical diagnostics of some quantities that arise from the model estimation. For example, the analyst might create histograms of the residuals, which are the differences between the observations and the values that would have been observed had the data followed the assumed model exactly.

The \emph{$p$-value} is defined as the probability under repeated equivalent experimentation that a result as or more extreme (relative to the null) would be observed, conditional on the null hypothesis being true. True $p$-values are never zero (with probability one). We can use $p$-values to perform null hypothesis testing by picking a threshold $\alpha$ and rejecting the null hypothesis when the $p$-value is below $\alpha$.
%
%

\subsection{Power of a Test}

The power of a null hypothesis test describes the number of type I and and type II errors in terms of the comparison of the p value with the threshold $\alpha$. A more powerful test is better at rejecting the null hypothesis when it indeed is false. The size of the test is the power evaluated at the null hypothesis.  For $H_0$ the null hypothesis, and $H_1$ the alternative we define the \emph{power} of the NHST by
$$\mbox{power} = \mathbb P\big( \mbox{reject } H_0 \big| H_1 \mbox{ is true} \big).$$

Even though we are adopting the Fisherian approach to NHST, the power concept is useful as it provides a means of comparing between test statistics.  In general, so long as the test is unbiased (i.e.\ the size is correctly achieved at the null hypothesis), we will prefer to use the test that is uniformly most powerful (UMP), or at least locally most powerful (LMP) in the region of the null hypothesis. In our case we use a permutation distribution to establish the behaviour of the test statistic under the null hypothesis, so formal considerations such as UMP and LMP are not possible.  Nonetheless, we can approximately assess different test statistics by using simulation experiments, and assessing the proportion of times that hypotheses are rejected at a nominal level based on specified differences.  We can compare these proportions directly between different test statistics to provide an idea of the relative merits of the test statistics.

\subsection{Randomization Tests}

Randomization-based tests relieve the analyst of the need to nominate a formal model under the null hypothesis, by providing an empirical estimate of the distribution of the test statistic under the null hypothesis.  That is, instead of nominating a theoretical distribution to use as a basis for comparison with the test statistic, an empirical null distribution is created, using simulation.  The procedure outlined in the following section is a randomization test. \citet{welsh-1996} provides a readable introduction.

\section{A Test Procedure}\label{sec:procedure}

\subsection{Two Sets of Labels (t-test)}

Assume that we have a collection of $n$ independent persistence
diagrams and a tentative labeling scheme that divides the collection
into two possibly dissimilar collections, say $\X_1$ containing $n_1$
diagrams and $\X_2$ containing $n_2$ diagrams.  For example, we may
conjecture that the persistence diagrams that represent fMRI data of
two groups of patients -- one group with a condition of interest, and
one without --- are dissimilar.  The assumption of independence
precludes the possibility that any of the observations may have an
influence on any of the other observations and is important for
generating the null distribution.  Our goal is to assess the strength
of evidence that the processes that generated the collections $\X_1$
and $\X_2$ differ.

We realize this goal in the NHST framework as follows.  We take as the
null hypothesis the claim that the labels are exchangeable; that is,
informally, that the current configuration of labels is no less likely
than would have happened under a random labeling scheme, relative to
the test statistic.  An example of this reasoning follows.  Given
three tosses of a fair coin, each possible configuration has an
identical probability --- $0.125$.  However, from the point of view of
counting the number of heads in three tosses, as a test statistic, it
is much less likely that the count will be three (for a fair coin,
$0.125$) than two (for a fair coin, $0.375$).  The same reasoning
holds in the proposed test: even though each possible configuration of
the label is equally possible under the null hypothesis, we conjecture
that very many of the random configurations lead to a value of the
test statistic that is quite different to that in the observed sample.

\subsection{Test statistics}

Randomization tests that are used to compare two numerical samples
usually focus on some function of the distance of the means of the
samples.  In the current study, computing the means is expensive,
therefore computing the distance from each observation to the means
for each simulated set of labels will also be expensive.  We therefore
instead nominate a function of the within-group pairwise distances as a test
statistic.  This statistic needs to be computed only once for each
possible pair, and be stored in a table.  Then simulation can proceed by
summing the distances of pairs of observations that are randomly
allocated to the same group.

When the observations are on the real line, and the measure of
location is obtained by minimising the L2 norm, the location estimate
is the mean, and the L2 norm is a monotonic function of the variance.
In the proposed setup we have two putative means and two putative
variances to consider.  The joint loss of any labelling scheme,
conditional on the sample sizes, can be expressed as the sum of the
group-wise variances.  Hence we propose that taking the mean or the
sum of the variances of the two groups would be a sensible test
statistic.  The usual expression for the sample variance (for sets of
real numbers), which is in the form closest to the L2 norm evaluated
at its minimum, is

%
\begin{equation}
\sigma^2_{\X} = \frac{1}{n-1} \sum_{i=1}^n (x_i- \bar x)^2  
\label{eqn:v.slow}
\end{equation}
however, an equivalent variation can be computed without first
calculating the mean, namely

\begin{equation}
\sigma^2_{\X} = \frac{1}{2n(n-1)} \sum_{i=1}^n \sum_{j=1}^n (x_i - x_j)^2  
\label{eqn:v.fast}
\end{equation}

One advantage of using (\ref{eqn:v.fast}) instead of
(\ref{eqn:v.slow}) is that the matrix of pairwise distances only has
to be computed once, and the means of the randomly generated samples
are not calculated.  Randomly shuffling the group labels amounts to
reading different sets of cells from the precalculated distance matrix.

For persistence diagrams with labeling $L$ into the sets $\X_1=\{X_{1,1}, X_{1,2}, \ldots, X_{1,n_1}\}$ and $\X_2=\{X_{2,1}, X_{2,2}, \ldots X_{2,n_2}\}$ the analogous test statistic is
\begin{equation}
\sigma^2_{\X_{12}}(L)= \sum_{m = 1}^2 
\frac{1}{2n_m(n_m-1)} \sum_{i=1}^{n_m} \sum_{j=1}^{n_m} d_2(X_{m,i} , X_{m,j})^2  
\label{eqn:test}
\end{equation}
where $d_2(\cdot, \cdot)$ is the distance function in \eqref{eq:distance}.

The distance between means is not a suitable test statistic for sets of persistence diagrams. There is a high computational cost of computing the means for each permutation. Furthermore, the Fr\'{e}chet mean is not necessarily unique which then leads to issues of how to define this loss function when it is not. 

%
%
%
%

\subsection{Families of joint loss functions}

The test statistic $\sigma^2_{\X_{12}}$ is not the only option for NHST permutation tests. It is an example of a joint loss function. A loss function is a function that maps data onto a real number intuitively representing some ``cost'' associated with the model and the data. A joint loss function is the sum of two or more loss functions corresponding to the total cost.

Just as there are many metrics there are many different joint loss functions. We can use other metrics to construct joint loss functions. Sensible choices include 
$$F_{p,q}(\{X_{1,i}\}, \{X_{2,i}\}) := \sum_{m=1}^2 \frac{1}{2n_m(n_m-1)} \sum_{i=1}^{n_m} \sum_{j=1}^{n_m} d_p(X_{m,i} , X_{m,j})^q$$
where $p\in [1,\infty]$, $d_p(\cdot, \cdot)$ is the distance function in \eqref{eq:distance} and $q \in [1,\infty)$.  If the grouping is sensible then these joint loss functions should be small. In this paper we will restrict our attention to $p=1,2, \infty$ and $q=1,2$.

%
%
%
%


\section{Monte-Carlo simulation of the $p$-value}\label{sec:Monte}
We can define a true $p$-value by considering how extreme the test statistic is on the observed data compared to the test statistics for every possible permutation of the labels. 

\begin{lemma}\label{lem:truep}
Let $X_{1,1}, X_{1,2}, \ldots, X_{1,n_1}$ and $X_{2,1}, X_{2,2}, \ldots X_{2,n_2}$ be persistence diagrams drawn i.i.d. (the null hypothesis) and let $\alpha$ be the proportion of all labelings $L$ such that $T(L)\leq T(L_{\text{observed}})$. Then for all $p\in [0,1]$ we have
$\mathbb{P}(\alpha \leq p)\leq p$.
\end{lemma}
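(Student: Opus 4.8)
The plan is to deduce the lemma from a single structural input --- under the null hypothesis the diagrams are exchangeable --- and to turn the statement into an elementary fact about ranks inside an exchangeable family. Write $T(L):=\sigma^2_{\X_{12}}(L)$ for the joint loss \eqref{eqn:test} attached to a labeling $L$. The first observation is that $T(L)$ depends on $L$ only through which $n_1$ of the $n:=n_1+n_2$ diagrams are placed in the first group, since each inner double sum in \eqref{eqn:test} is symmetric in the diagrams of its group. Condition now on the multiset $\mathcal M$ of the $n$ observed diagrams. Under the null hypothesis the diagrams are i.i.d., so their joint law is invariant under permuting the $n$ of them; hence, conditionally on $\mathcal M$, the group selected by $L_{\text{observed}}$ is a uniformly random choice of $n_1$ of the $n$ diagrams. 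That is precisely the distribution from which each of the $N$ relabelings in Algorithm~1 is drawn, and the relabelings are drawn independently of one another and of $L_{\text{observed}}$.

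Consequently, conditionally on $\mathcal M$, the $N+1$ numbers $T_0:=T(L_{\text{observed}})$ and $T_i:=T(L_i)$, $i=1,\dots,N$, are i.i.d. The algorithm returns $\alpha=\frac1N\#\{\,i\in\{1,\dots,N\}:T_i\le T_0\,\}$, so, setting $R:=\#\{\,i\in\{0,1,\dots,N\}:T_i\le T_0\,\}=N\alpha+1$, the quantity $R$ is the ties-inclusive rank of $T_0$ in an exchangeable sample of size $N+1$. The combinatorial fact I would invoke is that such an $R$ satisfies $\mathbb P(R\le k\mid\mathcal M)\le k/(N+1)$ for every integer $k$: conditioning further on the multiset of values $\{T_0,\dots,T_N\}$, on which (by exchangeability) $T_0$ is equally likely to be any of the $N+1$ entries, $R$ equals the cumulative multiplicity of the values that are $\le T_0$, and summing over the admissible values yields the bound. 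The same point is what makes ``$\le$'' (rather than ``$<$'') the correct, conservative choice of comparison in the inner loop. Substituting $R=N\alpha+1$ and then integrating out $\mathcal M$ is intended to produce $\mathbb P(\alpha\le p)\le p$ for all $p\in[0,1]$.

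The step I expect to be the genuine obstacle is exactly that last conversion: passing from $\mathbb P(R\le k\mid\mathcal M)\le k/(N+1)$ to the stated inequality \emph{uniformly} in $p$, rather than only up to an $O(1/N)$ slack. This is the familiar discretization subtlety of Monte Carlo permutation tests --- the clean form of the rank argument certifies that $\big(1+\#\{i:T_i\le T_0\}\big)/(N+1)$ is an exact $p$-value, i.e.\ the variant of Algorithm~1 that initializes $Z=1$ and divides by $N+1$ (equivalently, one that counts $L_{\text{observed}}$ itself among the relabelings) --- so to get the lemma exactly as phrased one should either adopt that normalization or verify directly that the $Z=0$, divide-by-$N$ output still meets $\mathbb P(\alpha\le p)\le p$. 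Everything else is routine and, notably, uses nothing about the geometry of $\D$ beyond the i.i.d.\ hypothesis: $d$ of \eqref{eq:distance} being a metric makes each $T(L)$ a well-defined nonnegative number, and no feature peculiar to persistence diagrams enters the argument.
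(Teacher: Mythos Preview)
Your approach is the paper's: condition on the multiset of diagrams, use exchangeability to make $L_{\text{observed}}$ one of $N+1$ i.i.d.\ uniform labelings, deduce that the tie-broken rank of $T_0=T(L_{\text{observed}})$ among $T_0,\dots,T_N$ is uniform on $\{0,\dots,N\}$ (the paper calls this rank $W$), and couple with the output via $W\le NZ$. Where you go further---and correctly---is at the discretization step you flag as the ``genuine obstacle.'' The paper writes ``$\P(W\le k)=k$'' and then ``$\P(W/N\le p)\le p$''; the first is a slip (uniformity on $\{0,\dots,N\}$ gives $(k+1)/(N+1)$) and the second is false: at $p=0$ one has $\P(W=0)=1/(N+1)>0$, and correspondingly $\P(\alpha=0)>0$ whenever ties among the $T_i$ have probability zero, so the lemma as stated actually fails at $p=0$. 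Your diagnosis is exactly right: the rank argument cleanly certifies the normalization $(1+\#\{i:T_i\le T_0\})/(N+1)$ (initialize $Z=1$, divide by $N+1$), and that is the version under which $\P(\alpha\le p)\le p$ holds for every $p\in[0,1]$; with the paper's $Z/N$ normalization one only gets the inequality up to an additive $1/(N+1)$.
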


\begin{proof}
Consider the list of all possible different sets of labels of the set $$\{X_{1,1}, X_{1,2}, \ldots, X_{1,n_1},X_{2,1}, X_{2,2}, \ldots X_{2,n_2}\},$$ alongside the observed labels. Order these labelings by the cost, lowest first, randomly arranging amongst ties. Let the random variable $W$ be the number of different labels appearing before the observed labels. Since under the model, the persistence diagrams are i.i.d., there is a uniform probability of the location of the original labeling over all the rankings. That is, $W$ has a uniform probability over the natural numbers from $0$ to $N$. This implies that $\P(W\leq k)= k$ for all $k \in \{0,1,\ldots N\}$. Furthermore $\P(W/N \leq p)\leq p$ for all $p\in [0,1]$.

Note that there is a coupling between $W/N$ and $Z$ with $W/N \leq Z$. They agree except potentially in the case where there are multiple labelings with the same cost as the originally observed. Using this coupling we conclude that for all $p \in [0,1]$
$$\mathbb{P}(Z \leq p) \leq \mathbb{P}(W/N\leq p) \leq p.$$
\end{proof}

The permutation $p$-value for a given labeling $L_0$ under the cost function $F$ is defined to be the proportion of labeling $L$ such that $F(L) \leq F(L_0)$. This is a true $p$-value by Lemma \ref{lem:truep}.

The total number of permutations of the labels is $\binom{n_1 + n_2}{n_1}$, which is usually far too large to check exhaustively.  We instead sample from the set of permutations with uniform probability. This will provide an unbiased estimator of the true permutation $p$-value. The randomization NHST algorithm is presented as Algorithm \ref{alg:unbiased}. 

\begin{algorithm}
 \SetAlgoLined
 \KwData{$n_1+ n_2$ persistence diagrams with labels $L_{\text{observed}}$ in disjoint sets of size $n_1$ and $n_2$, number of repetitions $N$, a joint loss function $F$}
 \KwResult{estimate of the total permutation p value}
 initialization - Z=0\;
 Compute $F(L_{\text{observed}})$ for the observed labels\;
 \For{$i=1$ to $N-1$}
 {Randomly shuffle the group labels into disjoint sets of size $n_1$ and $n_2$ to give labeling $L$\;
Compute $F(L)$ for the new samples\;
\If {$F(L)\leq F(L_{\text{observed}})$  }
{ $Z\mathrel{+}=1$}	
	}

$Z\mathrel{/}=N$\;
Output $Z$
\caption{An unbiased estimator of the permutation $p$-value.}\label{alg:unbiased}
\end{algorithm}

The output of Algorithm \ref{alg:unbiased} is an unbiased estimator of the permutation $p$-value but is not itself a $p$-value. This may at first appear counterintuitive. However, observe that the output of our estimator could be zero but a permutation $p$-value should never be zero. It must be at least $1/\binom{n_1 + n_2}{n_1}$ as $F(L_0) \leq F(L_0)$.
For more details as to why $Z$ is not a $p$-value see \cite{phipson2010permutation}. 

However, we can compute use the output of $Z$ to compute a true $p$-value. Let $p_{\text{total}}$ be the permutation $p$-value computed using all the permutations. The output $Z$ of Algorithm~1 follows the distribution 
$$Z \sim \frac{B(n,p_{\text{true}})}{n}.$$ 

\begin{thm}[\cite{phipson2010permutation}]
Let $Z$ be the output of Algorithm~1. $\dfrac{Z +1}{N+1}$ is a true $p$-value.
\end{thm}
The proof uses the facts that the usual distribution of the $p$-value under the null hypothesis is uniform and that $Z\sim \frac{B(n,p_{\text{true}})}{n}.$ For each randomly chosen permutation of labels $L$ there is a $p_\text{total}$ chance that $T(L)\leq T(L_{\text{observed}})$ and each of the label permutations is drawn independently.

By modifying  Algorithm \ref{alg:unbiased} we have obtain an algorithm for a true $p$-value.

\begin{algorithm}
 \SetAlgoLined
 \KwData{$n_1+ n_2$ persistence diagrams with labels $L_{\text{observed}}$ in disjoint sets of size $n_1$ and $n_2$, number of repetitions $N$, a joint loss function $F$}
 \KwResult{p value}
 initialization - Z=1\;
 Compute $F(L_{\text{observed}})$ for the observed labels\;
 \For{$i=1$ to $N-1$}
 {Randomly shuffle the group labels into disjoint sets of size $n_1$ and $n_2$ to give labeling $L$\;
Compute $F(L)$ for the new samples\;
\If {$F(L)\leq F(L_{\text{observed}})$  }
{ $Z\mathrel{+}=1$}	
	}

$Z\mathrel{/}=(N+1)$\;
Output $Z$
\caption{$p$-value algorithm for persistence diagrams.}\label{alg:true}
\end{algorithm} 
\section{Examples}\label{sec:examples}

\subsection{Point clouds of nearby shapes}

Now let $K$ be the circle of radius $1$ as shown in Figure \ref{fig:K}. Let $M$ be two concentric circles with radius $0.9$ and $1.1$ as shown in Figure \ref{fig:M}. We have the Hausdorff distance between $K$ and $M$ is $0.1$. 

\begin{center}
\begin{figure}[htp]
\begin{minipage}{0.45\linewidth}\centering
\begin{tikzpicture}
\draw[] (0,0) circle [radius=1];
\end{tikzpicture}
\caption{$K$}\label{fig:K}
\end{minipage}
\begin{minipage}{0.45\linewidth}\centering
\begin{tikzpicture}
\draw[] (0,0) circle [radius=0.9];
\draw[](0,0) circle [radius=1.1];
\end{tikzpicture}
\caption{$M$}\label{fig:M}
\end{minipage}
\end{figure}
\end{center}

Let $\rho_K(\sigma)$ denote the distribution of the convolution of the uniform distribution on $K$ with Gaussian noise $\mathcal{N}(0,\sigma^2)$. Similarly let $\rho_M(\sigma)$ denote the distribution of the convolution of the uniform distribution on $M$ with Gaussian noise $\mathcal{N}(0,\sigma^2)$. We want to compare sets of persistence diagrams constructed from point clouds drawn from $\rho_K(\sigma)$ to those constructed from point clouds drawn from $\rho_M(\sigma)$ for a range of $\sigma$.

We will now describe the simulation procedure. First we fixed a $\sigma \in \{0, 0.01, 0.02, 0.03, 0.04, 0.05\}$. Then we constructed 100 sets of 10 point clouds with each of the 10 point clouds consisting of $50$ i.i.d. points drawn from $\mu_K(\sigma)$. Similarly, we created 100 sets of 10 point clouds drawn using $\mu_M(\sigma)$. We then constructed the persistence diagrams in dimension $1$ for each of these 2000 point clouds via a Rips filtration (as described in \ref{subsec:DistPers}) and labelled each with either $K$ or $M$ depending on whether the point cloud was drawn using $\mu_K(\sigma)$ or $\mu_M(\sigma)$. These persistence diagrams were effectively summaries of the changes in the space of non-contractible loops as we considered the union of balls around each of the points in the point cloud as we increased the radius.

We then took $10$ persistence diagrams labelled with $K$ and $10$ persistence diagrams labelled with $M$. We then computed the corresponding $Z$ from Algorithm \ref{alg:true}, repeating the algorithm with the different joint loss functions $F_{(1,1)}, F_{(1,2)},F_{(2,1)},F_{(2,2)},F_{(\infty,1)}$ and $F_{(\infty,2)}$. Here, for each random labelling $L$, the joint loss functions are
$$F_{(p,q)}(L=(\{X_{1,i}\}, \{X_{2,i}\}))= \frac{1}{2\cdot 10 \cdot 9} \left(\sum_{i,j=1}^{10} d_p(X_{1,i} , X_{1,j})^q + \sum_{i,j=1}^{10} d_p(X_{2,i} , X_{2,j})^q.\right)$$ 

Using the set of $100$ values of $Z$ for the different simulations we estimated the power of the permutation test under these different joint loss functions and cutoffs of $\alpha=0.005, 0.01, 0.05$ and $0.1$. For a fixed $\sigma$, joint loss function and $\alpha$ we counted the number of corresponding $Z$ such that $Z<\alpha$ 

We ran this simulation for $\sigma=0, 0.01, 0.02, 0.03, 0.04$ and $0.05$. The results are tabulated in Figure \ref{fig:power}. Each figure corresponds to a different loss function and the different colours correspond to different levels of $\alpha$. As would be expected the ability to distinguish the two sets of diagrams increases as the sets are further apart and as the number of points in the point cloud increases.

An important observation is that the power varies amongst the different loss functions. In particular the loss functions using the bottleneck distance ($F_{(\infty, 1)}$ and $F_{(\infty, 2)}$) do much worse.

\begin{center}
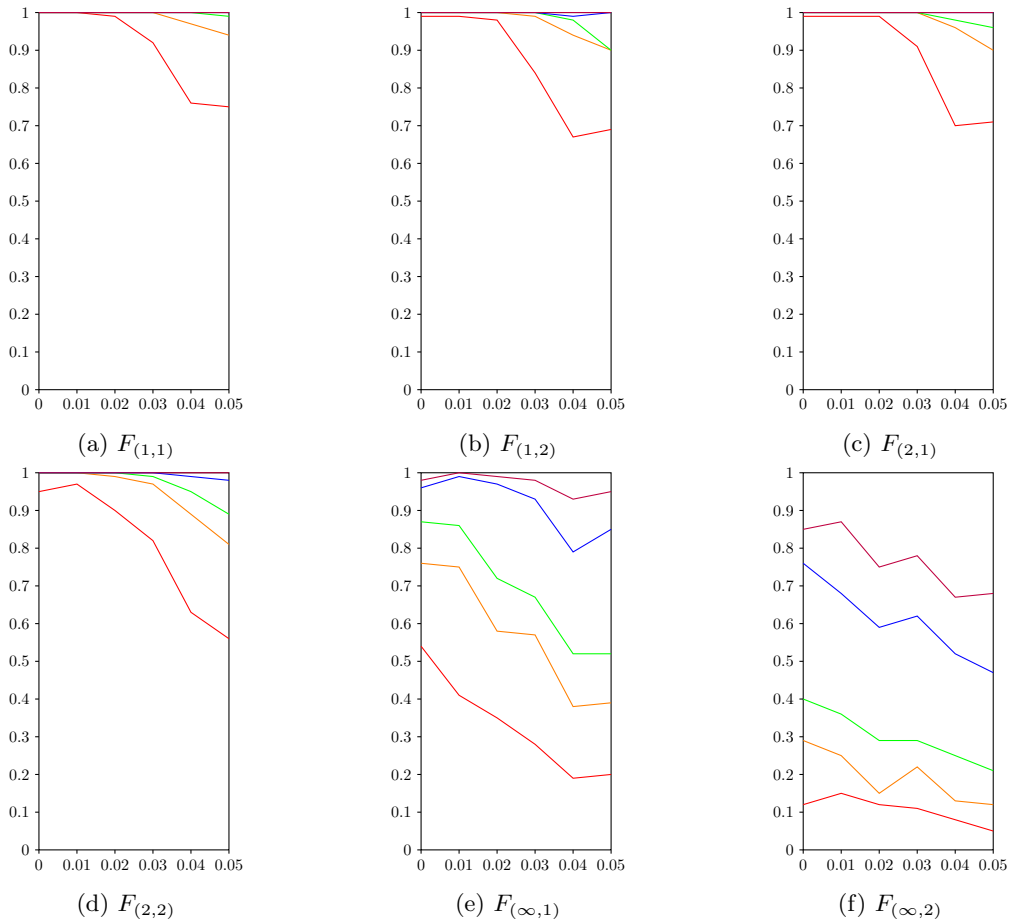
\begin{figure}[htp]
\begin{minipage}{0.33\linewidth}\centering
\begin{tikzpicture}[scale=.05,  every node/.style={scale=0.6}]
\draw[](0,0)--(0,-1)node[below]{$0$};
\draw[](0,90)--(-1,90)node[left]{$0.9$};
\draw[](0,80)--(-1,80)node[left]{$0.8$};
\draw[](0,70)--(-1,70)node[left]{$0.7$};
\draw[](0,60)--(-1,60)node[left]{$0.6$};
\draw[](0,50)--(-1,50)node[left]{$0.5$};
\draw[](0,100)--(-1,100)node[left]{$1$};
\draw[](0,10)--(-1,10)node[left]{$0.1$};
\draw[](0,20)--(-1,20)node[left]{$0.2$};
\draw[](0,30)--(-1,30)node[left]{$0.3$};
\draw[](0,40)--(-1,40)node[left]{$0.4$};
\draw[](0,0)--(-1,0)node[left]{$0$};
\draw[](10,0)--(10,-1)node[below]{$0.01$};
\draw[](20,0)--(20,-1)node[below]{$0.02$};
\draw[](30,0)--(30,-1)node[below]{$0.03$};
\draw[](40,0)--(40,-1)node[below]{$0.04$};
\draw[](50,0)--(50,-1)node[below]{$0.05$};
\draw[](50,0)--(0,0)--(0,100)--(50,100)--(50,0);
\draw[red](0,100)--(10,100)--(20,99)--(30,92)--(40,76)--(50,75);
\draw[orange](0,100)--(10,100)--(20,100)--(30,100)--(40,97)--(50,94);
\draw[green](0,100)--(10,100)--(20,100)--(30,100)--(40,100)--(50,99);
\draw[blue](0,100)--(10,100)--(20,100)--(30,100)--(40,100)--(50,100);
\draw[purple](0,100)--(10,100)--(20,100)--(30,100)--(40,100)--(50,100);
\end{tikzpicture}
\subcaption{ $F_{(1,1)}$}\label{fig:1,1}
\end{minipage}
\begin{minipage}{0.33\linewidth}\centering
\begin{tikzpicture}[scale=.05,  every node/.style={scale=0.6}]
\draw[](0,0)--(0,-1)node[below]{$0$};
\draw[](0,90)--(-1,90)node[left]{$0.9$};
\draw[](0,80)--(-1,80)node[left]{$0.8$};
\draw[](0,70)--(-1,70)node[left]{$0.7$};
\draw[](0,60)--(-1,60)node[left]{$0.6$};
\draw[](0,50)--(-1,50)node[left]{$0.5$};
\draw[](0,100)--(-1,100)node[left]{$1$};
\draw[](0,10)--(-1,10)node[left]{$0.1$};
\draw[](0,20)--(-1,20)node[left]{$0.2$};
\draw[](0,30)--(-1,30)node[left]{$0.3$};
\draw[](0,40)--(-1,40)node[left]{$0.4$};
\draw[](0,0)--(-1,0)node[left]{$0$};
\draw[](10,0)--(10,-1)node[below]{$0.01$};
\draw[](20,0)--(20,-1)node[below]{$0.02$};
\draw[](30,0)--(30,-1)node[below]{$0.03$};
\draw[](40,0)--(40,-1)node[below]{$0.04$};
\draw[](50,0)--(50,-1)node[below]{$0.05$};
\draw[](50,0)--(0,0)--(0,100)--(50,100)--(50,0);
\draw[red]((0,99)--(10,99)--(20,98)--(30,84)--(40,67)--(50,69);
\draw[orange](0,100)--(10,100)--(20,100)--(30,99)--(40,94)--(50,90);
\draw[green](0,100)--(10,100)--(20,100)--(30,100)--(40,98)--(50,90);
\draw[blue]((0,100)--(10,100)--(20,100)--(30,100)--(40,99)--(50,100);
\draw[purple](0,100)--(10,100)--(20,100)--(30,100)--(40,100)--(50,100);
\end{tikzpicture}
\subcaption{ $F_{(1,2)}$}\label{fig:1,1}
\end{minipage}
\begin{minipage}{0.33\linewidth}\centering
\begin{tikzpicture}[scale=.05,  every node/.style={scale=0.6}]
\draw[](0,0)--(0,-1)node[below]{$0$};
\draw[](0,90)--(-1,90)node[left]{$0.9$};
\draw[](0,80)--(-1,80)node[left]{$0.8$};
\draw[](0,70)--(-1,70)node[left]{$0.7$};
\draw[](0,60)--(-1,60)node[left]{$0.6$};
\draw[](0,50)--(-1,50)node[left]{$0.5$};
\draw[](0,100)--(-1,100)node[left]{$1$};
\draw[](0,10)--(-1,10)node[left]{$0.1$};
\draw[](0,20)--(-1,20)node[left]{$0.2$};
\draw[](0,30)--(-1,30)node[left]{$0.3$};
\draw[](0,40)--(-1,40)node[left]{$0.4$};
\draw[](0,0)--(-1,0)node[left]{$0$};
\draw[](10,0)--(10,-1)node[below]{$0.01$};
\draw[](20,0)--(20,-1)node[below]{$0.02$};
\draw[](30,0)--(30,-1)node[below]{$0.03$};
\draw[](40,0)--(40,-1)node[below]{$0.04$};
\draw[](50,0)--(50,-1)node[below]{$0.05$};
\draw[](50,0)--(0,0)--(0,100)--(50,100)--(50,0);
\draw[red](0,99)--(10,99)--(20,99)--(30,91)--(40,70)--(50,71);
\draw[orange](0,100)--(10,100)--(20,100)--(30,100)--(40,96)--(50,90);
\draw[green](0,100)--(10,100)--(20,100)--(30,100)--(40,98)--(50,96);
\draw[blue](0,100)--(10,100)--(20,100)--(30,100)--(40,100)--(50,100);
\draw[purple](0,100)--(10,100)--(20,100)--(30,100)--(40,100)--(50,100);
\end{tikzpicture}
\subcaption{ $F_{(2,1)}$}\label{fig:1,1}
\end{minipage}
\begin{minipage}{0.33\linewidth}\centering
\begin{tikzpicture}[scale=.05,  every node/.style={scale=0.6}]
\draw[](0,0)--(0,-1)node[below]{$0$};
\draw[](0,90)--(-1,90)node[left]{$0.9$};
\draw[](0,80)--(-1,80)node[left]{$0.8$};
\draw[](0,70)--(-1,70)node[left]{$0.7$};
\draw[](0,60)--(-1,60)node[left]{$0.6$};
\draw[](0,50)--(-1,50)node[left]{$0.5$};
\draw[](0,100)--(-1,100)node[left]{$1$};
\draw[](0,10)--(-1,10)node[left]{$0.1$};
\draw[](0,20)--(-1,20)node[left]{$0.2$};
\draw[](0,30)--(-1,30)node[left]{$0.3$};
\draw[](0,40)--(-1,40)node[left]{$0.4$};
\draw[](0,0)--(-1,0)node[left]{$0$};
\draw[](10,0)--(10,-1)node[below]{$0.01$};
\draw[](20,0)--(20,-1)node[below]{$0.02$};
\draw[](30,0)--(30,-1)node[below]{$0.03$};
\draw[](40,0)--(40,-1)node[below]{$0.04$};
\draw[](50,0)--(50,-1)node[below]{$0.05$};
\draw[](50,0)--(0,0)--(0,100)--(50,100)--(50,0);
\draw[red](0,95)--(10,97)--(20,90)--(30,82)--(40,63)--(50,56);
\draw[orange](0,100)--(10,100)--(20,99)--(30,97)--(40,89)--(50,81);
\draw[green](0,100)--(10,100)--(20,100)--(30,99)--(40,95)--(50,89);
\draw[blue](0,100)--(10,100)--(20,100)--(30,100)--(40,99)--(50,98);
\draw[purple](0,100)--(10,100)--(20,100)--(30,100)--(40,100)--(50,100);
\end{tikzpicture}
\subcaption{ $F_{(2,2)}$}\label{fig:1,1}
\end{minipage}
\begin{minipage}{0.33\linewidth}\centering
\begin{tikzpicture}[scale=.05,  every node/.style={scale=0.6}]
\draw[](0,0)--(0,-1)node[below]{$0$};
\draw[](0,90)--(-1,90)node[left]{$0.9$};
\draw[](0,80)--(-1,80)node[left]{$0.8$};
\draw[](0,70)--(-1,70)node[left]{$0.7$};
\draw[](0,60)--(-1,60)node[left]{$0.6$};
\draw[](0,50)--(-1,50)node[left]{$0.5$};
\draw[](0,100)--(-1,100)node[left]{$1$};
\draw[](0,10)--(-1,10)node[left]{$0.1$};
\draw[](0,20)--(-1,20)node[left]{$0.2$};
\draw[](0,30)--(-1,30)node[left]{$0.3$};
\draw[](0,40)--(-1,40)node[left]{$0.4$};
\draw[](0,0)--(-1,0)node[left]{$0$};
\draw[](10,0)--(10,-1)node[below]{$0.01$};
\draw[](20,0)--(20,-1)node[below]{$0.02$};
\draw[](30,0)--(30,-1)node[below]{$0.03$};
\draw[](40,0)--(40,-1)node[below]{$0.04$};
\draw[](50,0)--(50,-1)node[below]{$0.05$};
\draw[](50,0)--(0,0)--(0,100)--(50,100)--(50,0);
\draw[red](0,54)--(10,41)--(20,35)--(30,28)--(40,19)--(50,20);
\draw[orange](0,76)--(10,75)--(20,58)--(30,57)--(40,38)--(50,39);
\draw[green](0,87)--(10,86)--(20,72)--(30,67)--(40,52)--(50,52);
\draw[blue](0,96)--(10,99)--(20,97)--(30,93)--(40,79)--(50,85);
\draw[purple](0,98)--(10,100)--(20,99)--(30,98)--(40,93)--(50,95);
\end{tikzpicture}
\subcaption{ $F_{(\infty, 1)}$}\label{fig:1,1}
\end{minipage}
\begin{minipage}{0.33\linewidth}\centering
\begin{tikzpicture}[scale=.05,  every node/.style={scale=0.6}]
\draw[](0,0)--(0,-1)node[below]{$0$};
\draw[](0,90)--(-1,90)node[left]{$0.9$};
\draw[](0,80)--(-1,80)node[left]{$0.8$};
\draw[](0,70)--(-1,70)node[left]{$0.7$};
\draw[](0,60)--(-1,60)node[left]{$0.6$};
\draw[](0,50)--(-1,50)node[left]{$0.5$};
\draw[](0,100)--(-1,100)node[left]{$1$};
\draw[](0,10)--(-1,10)node[left]{$0.1$};
\draw[](0,20)--(-1,20)node[left]{$0.2$};
\draw[](0,30)--(-1,30)node[left]{$0.3$};
\draw[](0,40)--(-1,40)node[left]{$0.4$};
\draw[](0,0)--(-1,0)node[left]{$0$};
\draw[](10,0)--(10,-1)node[below]{$0.01$};
\draw[](20,0)--(20,-1)node[below]{$0.02$};
\draw[](30,0)--(30,-1)node[below]{$0.03$};
\draw[](40,0)--(40,-1)node[below]{$0.04$};
\draw[](50,0)--(50,-1)node[below]{$0.05$};
\draw[](50,0)--(0,0)--(0,100)--(50,100)--(50,0);
\draw[red](0,12)--(10,15)--(20,12)--(30,11)--(40,8)--(50,5);
\draw[orange](0,29)--(10,25)--(20,15)--(30,22)--(40,13)--(50,12);
\draw[green](0,40)--(10,36)--(20,29)--(30,29)--(40,25)--(50,21);
\draw[blue](0,76)--(10,68)--(20,59)--(30,62)--(40,52)--(50,47);
\draw[purple](0,85)--(10,87)--(20,75)--(30,78)--(40,67)--(50,68);
\end{tikzpicture}
\subcaption{ $F_{(\infty,2)}$}\label{fig:1,1}
\end{minipage}
\caption{The power of the permutation NHST with cut-offs $\alpha=0.001$(red), $0.005$ (orange), $0.01$ (green), $0.05$ (blue), and $0.1$ (purple). The $x$ coordinate indicates the noise $\sigma$ and the $y$-coordinate the simulated power.}\label{fig:power}
\end{figure}
\end{center}

\subsection{Distinguishing sets of shapes from silhouette databank}

In this example we will be using a variation on the the theme of a persistence diagram as a random element. Given a simplicial complex $M$ in Euclidean space and a unit vector $v$ we can create a filtration of $M$ by the height function $h_v$ in the direction of $v$ and hence we can construct a persistence diagram $X(K,v)$ from the filtration of $M$ by sublevel sets of $h_v$. The persistent homology transform of $M$ is the function from the sphere of directions to the space of persistence diagrams where $v$ is sent to $X(M,v)$. This process is explored in detail in \cite{turner2014persistent}. There it is shown that the persistent homology transform of a shape is a sufficient statistic and is stable under perturbations of the shape. As such it is reasonable, given sets of shapes, to analyze the sets of their persistent homology transforms. 

The distance squared between the persistent homology transforms of two shapes is effectively the integral over the unit sphere of the distances squared between the corresponding diagrams. This process can be made scale and translation invariant by appropriately modifying the diagrams $X(M,v)$. Furthermore it can be made rotation invariant by taking the infimum of all possible rotations. The $Lp$ distance between two aligned objects in the plane, $M_1$ and $M_2$, in the plane is 
\begin{align*}
d_{p}(M_1, M_2) := \left(\int_{S^{1}} d_p(X(M_1,v), X(M_2, v))^p\,dv\right)^{1/p}
\end{align*}
where $X(M_i,v)$ is the $0th$ dimensional persistence diagram corresponding to the height function $h_v:M_i \to \R, x \mapsto x\cdot v$.
The distance of unaligned objects is the minimal distance under different rotations. For more details the reader is referred to \cite{turner2014persistent}. \footnote{The reader should note that in \cite{turner2014persistent} the focus is on the L1 distance.} 

A shape database that has been commonly used in image retrieval is the 
MPEG-7 shape silhouette database \cite{Sikora01}. We used a subset of this database \cite{Lateckietal00} which includes seven class of objects:
 Bone, Heart, Glass, Fountain, Key, Fork, and Axe. There were twenty examples for each class for a total of 1400 shapes. The shapes are
 displayed in Figure \ref{images}. 
 
\begin{figure}[hbt]
\begin{center}
\includegraphics[height=1.5in]{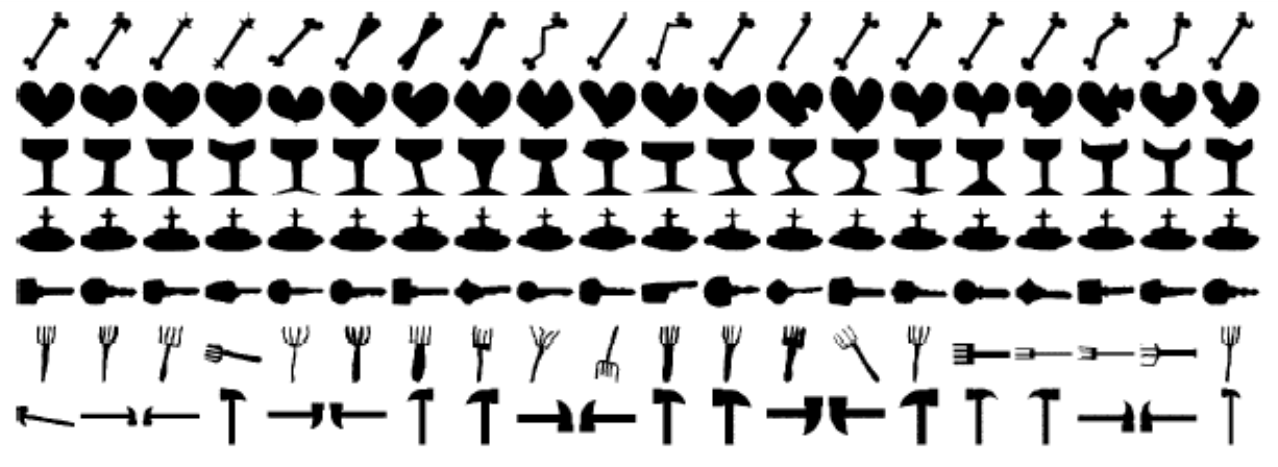}\\
\caption{\label{images}
The subset of the silhouette database. Each row corresponds to one of the objects: Bone, Heart, Glass, Fountain, Key, Fork, and Axe. Note that
although the objects are distinct, there is a great deal of variation within each object.
}
\end{center}
\end{figure}

We used the perimeters of the silhouettes which are available at \citet{Gao}. We applied the alignment algorithm we stated in Section 3.3 of  \cite{turner2014persistent} to shift, scale, and rotate the silhouettes. These perimeters are all homotopic to a circle so we used the $0$-th dimensional persistent homology transform with $64$ evenly spaced directions.

We considered both the loss functions 
$$F_1(\{X_{1,i}\}, \{X_{2,i}\}) := \sum_{m=1}^2 \frac{1}{2n_m(n_m-1)} \sum_{i=1}^{n_m} \sum_{j=1}^{n_m} d_1(X_{m,i} , X_{m,j})$$ and $$F_2((\{X_{1,i}\}, \{X_{2,i}\}) := \sum_{m=1}^2 \frac{1}{2n_m(n_m-1)} \sum_{i=1}^{n_m} \sum_{j=1}^{n_m} d_2(X_{m,i} , X_{m,j})^2.$$

For each pair of classes we then computed the corresponding $p$ values using $10000$ repetitions. The algorithm for each different pair always resulted in $p<0.0001$, using either the $F_{(1,1)}$ or the $F_{(2,2)}$ loss function. This implies that we expect that the distributions of persistent homology transforms are very significantly different. This result implies that NHST-based classification via the persistent homology transform should be possible.

\subsection{Concurrence Topology in fMRI data}

Given a set of variables and and samples of dichotomized data across those variables, concurrence topology is a method of creating filtration of a simplicial complex (and hence also persistence diagrams)  to reflect the frequency of when subsets of the variables are simultaneously active. This method is studied in \cite{ellisklein} and applied to fMRI data for both subjects diagnosed with ADHD and healthy controls. We briefly describe the procedure and refer the reader to \cite{ellisklein} for details.

Some set locations in the brain were measured. For each time interval we get a number associated to how active that location in the brain is. These data are dichotomized by choosing a cutoff value. For each location in the brain we associate a vertex $v_i$. We assign to the vertex $v_i$ the value of number of times that location was active. We assign to the edge $[v_i, v_j]$ the number of times both $v_i$ and $v_j$ were active simultaneously. Similarly assign to the face $[v_i, v_j, v_k] $ the number of times all three of $v_i$, $v_j$ and $v_k$ were active simultaneously. The same process assigns values to all simplices in the complete simplicial complex. The filtration is by superlevel sets. This is a simplification of the procedure. As part of the cleaning process some of the locations of in the brain are ignored and this set is different depending on the subject.

We calculated the p values associated with the sets of persistence diagrams in the ``default mode network'' that Ellis and Klein computed and kindly provided.
{\small
\begin{table}[ht]
\caption{Output of the Algorithm \ref{alg:true}, using loss function $F_{2,2}$, with $10000$ repetitions}
\centering
\begin{tabular}{c c c c c c c}
\hline\hline
Dimension & 0 & 1 & 2 & 3 & 4 & 5 \\ [0.5ex] 
\hline
ADHD vs Control 			& 	0.75272	&	0.20537	& 	0.50679 	&	0.41815	 &	0.10146	&	0.01162\\
ADHD vs Control in Females 	&  	0.68016	& 	0.59175	&	0.77673 	&	0.90267	&\red{0.00588}	&	0.30057	\\
ADHD vs Control in Males 	&	0.46101	&	0.22070 	& 	0.59409  	&	0.48437 	&	0.41364	&\red{0.00975}	\\
Females vs Males in Control 	&\red{0.00930}	& 	0.59964	&	0.33578	&	0.09851	&	0.19303	&	0.26304\\
Females vs Males in ADHD 	&	0.48694	&	0.45473	&	0.60937	&	0.59045 	&	0.02443	&	0.83618\\ [1ex]
\hline
\end{tabular}
\label{table:ADHD}
\end{table}
}
In \red{red} are the p values which are $\leq 0.01$. If we take a significance cutoff at $p=0.01$ then our expected false discovery rate is much less than $1$.  

A few comments should be made about the data set. The fMRI data set was generated at New York University and distributed as part of the 1000 Functional Connectomes project (\url{http://fcon 1000.projects.nitrc.org/}). It includes 41 healthy controls (NewYork a part1) and 25 adults diagnosed with ADHD ('NewYork a ADHD").
Unfortunately the samples were highly imbalanced with respect to age and gender. Only 20\% of the ADHD group was female, while about half of the controls were. About 25\% of the controls were children (younger than 20; median age = 12), while there were no children in the ADHD group. Among adults, ages ranged from about 21 to about 50 in each group. The median age in the ADHD group was 37, while in the control group the median adult age was 27. We did not control for age while computing the p values, and it is not presently clear how that could be done in this context.

\section{Discussion and Future Directions}

This paper shows an example of how non-parametric methods from statistics can be adapted for the use with topological summary statistics. There is much potential in exploring other non-parametric methods. 

There are a variety of related problems for hypothesis testing.  One problem is whether it is possible to do alternate hypothesis testing when the observations are persistence diagrams. Another area to explore is determining under what circumstances we can guarantee the power of the null hypothesis testing procedure to distinguish different distributions of persistence diagrams. This exploration may be through theoretical results or via simulations.

\subsection{More than Two Label Sets}

When the goal is to test whether $k > 2$ groups of observations
differ, we can use an extension of the test statistic in
equation~\ref{eqn:test} that is analogous to the F statistic in
analysis of variance.

\begin{equation}
\sigma^2_{x_{k}} = \sum_{m = 1}^k 
\frac{1}{2n_m(n_m-1)} \sum_{i=1}^{n_m} \sum_{j=1}^{n_m} (x_{mi} - x_{mj})^2  
\end{equation}

This is not the same as the F statistic because we do not propose to
compute the between-groups sums of squares, as that is expensive in
this setting.

\section{Acknowledgements}

We thank Steve Ellis and Arno Klein for providing us with the persistence diagrams produced in their work. The authors would like to acknowledge the assistance of the Defence Science Institute in facilitating this work. 

\bibliographystyle{apalike}
\bibliography{nhst}

\begin{thebibliography}{}

\bibitem[Balakrishnan et~al., 2013]{balakrishnan2013statistical}
Balakrishnan, S., Fasy, B., Lecci, F., Rinaldo, A., Singh, A., and Wasserman,
  L. (2013).
\newblock Statistical inference for persistent homology.
\newblock {\em arXiv preprint arXiv:1303.7117}.

\bibitem[Bendich et~al., 2010]{bendich2010computing}
Bendich, P., Edelsbrunner, H., and Kerber, M. (2010).
\newblock Computing robustness and persistence for images.
\newblock {\em Visualization and Computer Graphics, IEEE Transactions on},
  16(6):1251--1260.

\bibitem[Bubenik, 2015]{bubenik2015statistical}
Bubenik, P. (2015).
\newblock Statistical topological data analysis using persistence landscapes.
\newblock {\em The Journal of Machine Learning Research}, 16(1):77--102.

\bibitem[Bubenik and Kim, 2007]{bubenik2007statistical}
Bubenik, P. and Kim, P.~T. (2007).
\newblock A statistical approach to persistent homology.
\newblock {\em Homology, Homotopy and Applications}, 9(2):337--362.

\bibitem[Casella and Berger, 1990]{casella+berger-1990}
Casella, G. and Berger, R.~L. (1990).
\newblock {\em Statistical Inference}.
\newblock Duxbury Press, Belmont, CA.

\bibitem[Cerri et~al., 2006]{cerri2006retrieval}
Cerri, A., Ferri, M., and Giorgi, D. (2006).
\newblock Retrieval of trademark images by means of size functions.
\newblock {\em Graphical Models}, 68(5):451--471.

\bibitem[Chazal et~al., 2013]{chazal2013optimal}
Chazal, F., Glisse, M., Labru{\`e}re, C., and Michel, B. (2013).
\newblock Optimal rates of convergence for persistence diagrams in topological
  data analysis.
\newblock {\em arXiv preprint arXiv:1305.6239}.

\bibitem[Ellis and Klein, 2012]{ellisklein}
Ellis, S.~P. and Klein, A. (2012).
\newblock Describing high-order statistical dependence using" concurrence
  topology", with application to functional mri brain data.
\newblock {\em arXiv preprint arXiv:1212.1642}.

\bibitem[Gamble and Heo, 2010]{gamble2010exploring}
Gamble, J. and Heo, G. (2010).
\newblock Exploring uses of persistent homology for statistical analysis of
  landmark-based shape data.
\newblock {\em Journal of Multivariate Analysis}, 101(9):2184--2199.

\bibitem[Gao, 2004]{Gao}
Gao, J.~X. (2004).
\newblock Visionlab.
\newblock WWW.
\newblock \url{http://visionlab.uta.edu/shape_data.htm}.

\bibitem[Hatcher, 2002]{hatcher2002algebraic}
Hatcher, A. (2002).
\newblock {\em Algebraic topology}.
\newblock 清华大学出版社有限公司.

\bibitem[Latecki et~al., 2000]{Lateckietal00}
Latecki, L.~J., Lakamper, R., and Eckhardt, T. (2000).
\newblock Shape descriptors for non-rigid shapes with a single closed contour.
\newblock In {\em Computer Vision and Pattern Recognition, 2000. Proceedings.
  IEEE Conference on}, volume~1, pages 424--429. IEEE.

\bibitem[Mileyko et~al., 2011]{mileyko2011probability}
Mileyko, Y., Mukherjee, S., and Harer, J. (2011).
\newblock Probability measures on the space of persistence diagrams.
\newblock {\em Inverse Problems}, 27(12):124007.

\bibitem[Pawitan, 2001]{pawitan-2001}
Pawitan, Y. (2001).
\newblock {\em {In All Likelihood: Statistical Modelling and Inference Using
  Likelihood}}.
\newblock Clarendon Press, Oxford.

\bibitem[Phipson and Smyth, 2010]{phipson2010permutation}
Phipson, B. and Smyth, G.~K. (2010).
\newblock Permutation p-values should never be zero: calculating exact p-values
  when permutations are randomly drawn.
\newblock {\em Statistical applications in genetics and molecular biology},
  9(1).

\bibitem[Robins and Turner, 2015]{Robins2015}
Robins, V. and Turner, K. (2015).
\newblock Principal component analysis of persistent homology rank functions
  with case studies of spatial point patterns, sphere packing and colloids.
\newblock {\em arXiv preprint arXiv:1507.01454}.

\bibitem[Sikora, 2001]{Sikora01}
Sikora, T. (2001).
\newblock The mpeg-7 visual standard for content description---an overview.
\newblock {\em Circuits and Systems for Video Technology, IEEE Transactions
  on}, 11(6):696--702.

\bibitem[Turner, 2013]{turner2013medians}
Turner, K. (2013).
\newblock Means and medians of sets of persistence diagrams.
\newblock {\em arXiv preprint arXiv:1307.8300}.

\bibitem[Turner et~al., 2014a]{turner2014frechet}
Turner, K., Mileyko, Y., Mukherjee, S., and Harer, J. (2014a).
\newblock Fr{\'e}chet means for distributions of persistence diagrams.
\newblock {\em Discrete \& Computational Geometry}, 52(1):44--70.

\bibitem[Turner et~al., 2014b]{turner2014persistent}
Turner, K., Mukherjee, S., and Boyer, D.~M. (2014b).
\newblock Persistent homology transform for modeling shapes and surfaces.
\newblock {\em Information and Inference}, 3(4):310--344.

\bibitem[Welsh, 1996]{welsh-1996}
Welsh, A.~H. (1996).
\newblock {\em {Aspects of Statistical Inference}}.
\newblock John Wiley \& Sons, Inc., New York, NY.

\end{thebibliography}

\end{document}